\documentclass[11pt,a4paper]{amsart}

\usepackage[T1]{fontenc}
\usepackage[utf8]{inputenc}
\usepackage[english]{babel}

\usepackage[a4paper,margin=2.8cm]{geometry}

\usepackage{amsmath,amssymb,amsfonts,amsthm,mathtools}
\usepackage{mathrsfs}
\usepackage{bm}
\usepackage{tikz-cd}
\usepackage{tikz-cd}
\usepackage{enumitem}
\usepackage{xcolor}

\usepackage[colorlinks=true,linkcolor=blue,citecolor=blue,urlcolor=blue]{hyperref}

\theoremstyle{plain}
\newtheorem{theorem}{Theorem}[section]
\newtheorem{proposition}[theorem]{Proposition}
\newtheorem{lemma}[theorem]{Lemma}
\newtheorem{corollary}[theorem]{Corollary}

\theoremstyle{definition}
\newtheorem{definition}[theorem]{Definition}
\newtheorem{remark}[theorem]{Remark}

\newcommand{\C}{\mathbb{C}}
\newcommand{\R}{\mathbb{R}}
\newcommand{\CP}{\mathbb{CP}}

\newcommand{\PC}{\mathrm{PC}}
\newcommand{\Tr}{\mathrm{Tr}}
\newcommand{\rank}{\mathrm{rank}}

\newcommand{\Barg}{\mathrm{B}}
\newcommand{\Sphere}{\mathbb{S}}

\newcommand{\ket}[1]{\left|#1\right\rangle}
\newcommand{\bra}[1]{\left\langle #1\right|}
\newcommand{\ip}[2]{\left\langle #1\,\middle|\,#2\right\rangle}

\title[Pairwise comparisons and qubit states]{Remarks on pairwise comparisons, transition amplitudes, and qubit states}

\author[Jean-Pierre Magnot]{Jean-Pierre Magnot}
\address{{SFR MATHSTIC, LAREMA, Universit\'e d’Angers, 2 Bd Lavoisier, 
49045 Angers cedex 1, France;  Lyc\'ee Jeanne d'Arc, 40 avenue de Grande Bretagne, 63000 Clermont-Ferrand, 
France}; Lepage Research Institute, 17 novembra 1, 081 16 Presov, Slovakia}
\email{\small magnot@math.cnrs.fr; jean-pierr.magnot@ac-clermont.fr}

\keywords{pairwise comparisons, qubit states, transition amplitudes, geometric phase, Bargmann invariants, Bloch sphere}

\begin{document}

\begin{abstract}
We discuss a pairwise-comparison viewpoint on finite families of qubit states. 
Starting from transition amplitudes between pure states, we distinguish three associated levels of comparison data: complex amplitudes, transition probabilities, and phase-valued pairwise comparisons. 
In the non-orthogonal case, the phase data define a \(U(1)\)-valued reciprocal pairwise comparison structure. 
We show that the corresponding triangular defects are naturally related to normalized Bargmann invariants and therefore to geometric phases. 
This gives a simple interpretation of inconsistency-type quantities in terms of quantum kinematics. 
We also comment on realizability constraints coming from Gram matrices of rank at most two, and on the passage from unitary phase data to more general transition data. 
The aim of the paper is mainly conceptual: to isolate a common language between pairwise comparisons and elementary quantum geometry.
\end{abstract}

\maketitle

\section{Introduction}

Pairwise comparisons are usually introduced as tools for ranking, decision, or relative evaluation.
However, they also provide a simple mathematical language for relational data: what is given is not an absolute value attached to an object, but a comparison between two objects.
This point of view becomes particularly suggestive in quantum theory, where transition amplitudes and transition probabilities are themselves relational quantities attached to pairs of states rather than to isolated states.

The purpose of this note is to examine this analogy in the elementary case of qubit states.
Given a finite family of pure states
\[
\psi_1,\dots,\psi_N \in \C^2,
\qquad
\|\psi_i\|=1,
\]
one may consider the transition amplitudes
\[
g_{ij}=\langle \psi_i,\psi_j\rangle,
\]
the transition probabilities
\[
p_{ij}=|g_{ij}|^2,
\]
and, whenever \(g_{ij}\neq 0\), the normalized phases
\[
u_{ij}=\frac{g_{ij}}{|g_{ij}|}\in U(1).
\]
These three levels of data define different kinds of pairwise comparison structures.
The first retains the full complex transition information, the second forgets phase, and the third keeps only the phase part.

The phase-valued level is especially close to the usual formalism of reciprocal pairwise comparisons.
Indeed, if all overlaps are nonzero, then
\[
u_{ii}=1,
\qquad
u_{ji}=u_{ij}^{-1},
\]
so that the matrix \(U=(u_{ij})\) defines a \(U(1)\)-valued reciprocal pairwise comparison matrix.
Its triangular defects
\[
\kappa_{ijk}=u_{ij}u_{jk}u_{ki}
\]
then measure a failure of multiplicative coherence.
In the present quantum setting, however, these defects are not merely algebraic artifacts: they are naturally related to normalized Bargmann invariants and therefore to geometric phase phenomena.

This observation suggests that inconsistency-type quantities in pairwise comparison theory admit a natural interpretation in elementary quantum geometry.
Conversely, it shows that transition data for qubit states can be organized in a language familiar from pairwise comparison theory.
The relation is not exact at every level: not every reciprocal \(U(1)\)-valued matrix comes from a family of qubit states, and the full amplitude data are subject to strong positivity and rank constraints through the associated Gram matrix.
Nevertheless, the comparison is sufficiently rigid to be mathematically meaningful.

Our aim here is modest.
We do not propose a full reconstruction theory, nor a general theory of quantum pairwise comparisons.
Rather, we isolate a simple dictionary between reciprocal comparison data, transition amplitudes, and geometric phases in the qubit case.
The paper is organized as follows.
In Section~2 we introduce the relevant comparison structures associated with qubit transition data.
Section~3 is devoted to phase-valued comparisons and triangular defects.
In Section~4 we relate these defects to Bargmann invariants and geometric phases.
Section~5 contains remarks on realizability and Gram matrix constraints.
We conclude with a few comments on possible extensions..

\section{Pairwise comparisons and qubit transition data}

\subsection{Pairwise comparison matrices}

We begin by recalling the standard notion of a pairwise comparison matrix.

\begin{definition}
A \emph{multiplicative pairwise comparison matrix} of order \(N\) is a matrix
\[
A=(a_{ij})_{1\leq i,j\leq N}
\]
with entries in \(\R_{+}^{\times}\) such that
\[
a_{ii}=1
\qquad\text{and}\qquad
a_{ji}=a_{ij}^{-1}
\]
for all \(i,j\).
\end{definition}

Thus the entry \(a_{ij}\) measures the relative comparison of item \(i\) with item \(j\), while reciprocity imposes that reversing the order inverts the comparison.

A classical coherence condition is the multiplicative transitivity relation
\[
a_{ij}a_{jk}=a_{ik}
\qquad
\text{for all } i,j,k.
\]
Equivalently, the triangular defect
\[
a_{ij}a_{jk}a_{ki}
\]
should be equal to \(1\).

For the purposes of the present note, it is convenient to extend this notion to group-valued coefficients.

\begin{definition}
Let \(G\) be a group, written multiplicatively, with neutral element \(e\).
A \emph{\(G\)-valued reciprocal pairwise comparison matrix} of order \(N\) is a matrix
\[
U=(u_{ij})_{1\leq i,j\leq N}
\]
with entries in \(G\) such that
\[
u_{ii}=e
\qquad\text{and}\qquad
u_{ji}=u_{ij}^{-1}
\]
for all \(i,j\).

We denote by \(\PC_N(G)\) the set of such matrices.
\end{definition}

When \(G=U(1)\), this gives the natural notion of a phase-valued reciprocal comparison matrix.
For such a matrix, the basic defect attached to a triple \((i,j,k)\) is
\[
\kappa_{ijk}:=u_{ij}u_{jk}u_{ki}\in G.
\]
If \(\kappa_{ijk}=e\), the triple is multiplicatively coherent.

\subsection{Qubit states}

We now recall the elementary notion of a qubit state.

\begin{definition}
A \emph{pure qubit state} is a one-dimensional subspace of \(\C^2\), that is, a point of the complex projective line
\[
\CP^1.
\]
Equivalently, it may be represented by a normalized vector
\[
\psi\in \C^2,
\qquad
\|\psi\|=1,
\]
defined up to multiplication by a phase factor \(e^{i\theta}\in U(1)\).
\end{definition}

Thus two normalized vectors \(\psi,\phi\in\C^2\) represent the same pure qubit state if and only if
\[
\phi=e^{i\theta}\psi
\]
for some \(\theta\in\R\).

\begin{remark}
The space of pure qubit states is naturally identified with the Bloch sphere.
We shall not use this identification immediately, but it will be useful later when geometric phase interpretations are discussed.
\end{remark}

In what follows, we consider a finite family of normalized vectors
\[
\psi_1,\dots,\psi_N\in \C^2,
\qquad
\|\psi_i\|=1.
\]
Strictly speaking, the physical states are the rays \([\psi_i]\in \CP^1\), but it is convenient to choose normalized representatives.

\subsection{Transition amplitudes, probabilities, and phases}

To such a family \((\psi_i)_{1\leq i\leq N}\), one associates the matrix of transition amplitudes
\[
G=(g_{ij})_{1\leq i,j\leq N},
\qquad
g_{ij}:=\ip{\psi_i}{\psi_j}.
\]

\begin{definition}
The matrix
\[
G=(g_{ij})_{1\leq i,j\leq N},
\qquad
g_{ij}=\ip{\psi_i}{\psi_j},
\]
is called the \emph{Gram matrix} of the family \((\psi_i)\).
\end{definition}

Its basic properties are classical:

\begin{proposition}
The Gram matrix \(G\) satisfies:
\begin{enumerate}[label=\textup{(\roman*)}]
\item \(g_{ii}=1\) for all \(i\);
\item \(g_{ji}=\overline{g_{ij}}\) for all \(i,j\);
\item \(G\) is positive semidefinite;
\item \(\rank(G)\leq 2\).
\end{enumerate}
\end{proposition}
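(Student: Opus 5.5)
The plan is to realize \(G\) as a product \(G=\Psi^*\Psi\), where \(\Psi\) is the \(2\times N\) complex matrix whose \(j\)-th column is \(\psi_j\in\C^2\). We use the convention that the inner product \(\ip{\cdot}{\cdot}\) is conjugate-linear in the first slot, so that \(g_{ij}=\psi_i^*\psi_j\). The \((i,j)\) entry of \(\Psi^*\Psi\) is then exactly \(\psi_i^*\psi_j=g_{ij}\). All four properties will follow from this factorization together with the normalization \(\|\psi_i\|=1\).

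For (i), we simply note that \(g_{ii}=\ip{\psi_i}{\psi_i}=\|\psi_i\|^2=1\). For (ii), we use conjugate symmetry of the Hermitian inner product: \(g_{ji}=\ip{\psi_j}{\psi_i}=\overline{\ip{\psi_i}{\psi_j}}=\overline{g_{ij}}\). Equivalently, \(G^*=(\Psi^*\Psi)^*=\Psi^*\Psi=G\).

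For (iii), take any \(c=(c_1,\dots,c_N)^T\in\C^N\) and compute
\[
c^*Gc=\sum_{i,j}\overline{c_i}\,c_j\,\ip{\psi_i}{\psi_j}
=\Bigl\|\sum_j c_j\psi_j\Bigr\|^2=\|\Psi c\|^2\geq 0 .
\]
This shows that \(G\) is positive semidefinite.

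For (iv), we argue via the factorization: \(\rank(G)=\rank(\Psi^*\Psi)\leq\rank(\Psi)\leq 2\), since \(\Psi\) has only two rows. An equivalent route, which is useful later, is to observe that \(\ker G=\ker\Psi\). The computation in (iii) shows that \(Gc=0\) forces \(\Psi c=0\), and the converse inclusion is trivial. Rank–nullity then gives \(\rank G=\rank\Psi\leq 2\).

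None of these steps is a genuine obstacle. The only point requiring care is to fix the convention for the inner product, conjugate-linear in the first slot as in the bra–ket notation \(\ip{\cdot}{\cdot}\), so that the identity \(G=\Psi^*\Psi\) and the conjugate symmetry in (ii) hold with the stated index order.
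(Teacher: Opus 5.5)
Your proof is correct and matches the paper's: (i) and (ii) come straight from normalization and conjugate symmetry, and (iii) uses the same identity $c^*Gc=\|\sum_j c_j\psi_j\|^2$, with the same convention of an inner product that is conjugate-linear in the first slot. For (iv) the paper only says that the span of the $\psi_j$ has dimension at most $2$; your factorization $G=\Psi^*\Psi$ (equivalently $\ker G=\ker\Psi$) supplies the step left implicit there, namely that $\rank(G)$ equals the dimension of that span.
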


\begin{proof}
The first two statements follow directly from the definition of the Hermitian scalar product and the normalization \(\|\psi_i\|=1\).
For positivity, let \(c_1,\dots,c_N\in\C\). Then
\[
\sum_{i,j=1}^N \overline{c_i}\, g_{ij}\, c_j
=
\sum_{i,j=1}^N \overline{c_i}\,\ip{\psi_i}{\psi_j}\,c_j
=
\left\|\sum_{j=1}^N c_j \psi_j\right\|^2
\geq 0.
\]
Hence \(G\) is positive semidefinite.
Finally, since the vectors \(\psi_1,\dots,\psi_N\) lie in \(\C^2\), the span of the family has dimension at most \(2\), so \(\rank(G)\leq 2\).
\end{proof}

From \(G\), one obtains two simpler levels of pairwise data.

\begin{definition}
The \emph{transition probability matrix} associated with \((\psi_i)\) is
\[
P=(p_{ij})_{1\leq i,j\leq N},
\qquad
p_{ij}:=|g_{ij}|^2=|\ip{\psi_i}{\psi_j}|^2.
\]
\end{definition}

The coefficients \(p_{ij}\) satisfy
\[
p_{ii}=1,
\qquad
p_{ij}=p_{ji},
\qquad
0\leq p_{ij}\leq 1.
\]
Thus \(P\) is a symmetric matrix of transition probabilities rather than a reciprocal pairwise comparison matrix in the usual multiplicative sense.

If, in addition, all overlaps are nonzero, one may also extract their phases.

\begin{definition}
Assume that
\[
\ip{\psi_i}{\psi_j}\neq 0
\qquad
\text{for all } i,j.
\]
The associated \emph{phase comparison matrix} is
\[
U=(u_{ij})_{1\leq i,j\leq N},
\qquad
u_{ij}:=\frac{\ip{\psi_i}{\psi_j}}{|\ip{\psi_i}{\psi_j}|}\in U(1).
\]
\end{definition}

\begin{proposition}
Under the non-orthogonality assumption
\[
\ip{\psi_i}{\psi_j}\neq 0
\qquad
\text{for all } i,j,
\]
the matrix \(U=(u_{ij})\) belongs to \(\PC_N(U(1))\).
\end{proposition}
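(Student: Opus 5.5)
The plan is to check the three defining requirements of \(\PC_N(U(1))\) one at a time, using only the first two items of the preceding Proposition on the Gram matrix. First, each \(u_{ij}\) is well defined, because the non-orthogonality assumption gives \(g_{ij}\neq 0\). Its modulus is \(|g_{ij}|/|g_{ij}|=1\), so \(u_{ij}\in U(1)\).

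Next, for the diagonal, item (i) gives \(g_{ii}=\|\psi_i\|^2=1\). Hence \(u_{ii}=1/|1|=1\), which is the neutral element of \(U(1)\).

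For reciprocity, item (ii) gives \(g_{ji}=\overline{g_{ij}}\) and \(|g_{ji}|=|g_{ij}|\). Therefore
\[
u_{ji}=\frac{\overline{g_{ij}}}{|g_{ij}|}=\overline{u_{ij}}.
\]
Since \(|u_{ij}|=1\), we have \(\overline{u_{ij}}=u_{ij}^{-1}\), and so \(u_{ji}=u_{ij}^{-1}\).

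No step is a genuine obstacle. The only point needing care is the identification \(\overline{z}=z^{-1}\) on the unit circle. It uses \(z\overline{z}=|z|^2=1\), so it has to come after the verification that \(|u_{ij}|=1\).
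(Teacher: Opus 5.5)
Your proof is correct and follows essentially the same route as the paper: \(u_{ii}=1\) from \(g_{ii}=1\), and reciprocity from \(g_{ji}=\overline{g_{ij}}\) together with \(\overline{z}=z^{-1}\) on the unit circle. The only difference is that you explicitly verify \(|u_{ij}|=1\) before using it, whereas the paper takes \(u_{ij}\in U(1)\) as already built into the definition.
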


\begin{proof}
For every \(i\),
\[
u_{ii}=\frac{\ip{\psi_i}{\psi_i}}{|\ip{\psi_i}{\psi_i}|}
=\frac{1}{1}=1.
\]
Moreover,
\[
u_{ji}
=
\frac{\ip{\psi_j}{\psi_i}}{|\ip{\psi_j}{\psi_i}|}
=
\frac{\overline{\ip{\psi_i}{\psi_j}}}{|\ip{\psi_i}{\psi_j}|}
=
u_{ij}^{-1},
\]
since \(u_{ij}\in U(1)\).
Hence \(U\in \PC_N(U(1))\).
\end{proof}

\subsection{Three levels of pairwise quantum data}

The family \((\psi_i)\) therefore gives rise to three distinct levels of pairwise information:
\[
g_{ij}=\ip{\psi_i}{\psi_j},
\qquad
p_{ij}=|g_{ij}|^2,
\qquad
u_{ij}=\frac{g_{ij}}{|g_{ij}|}
\quad \text{when } g_{ij}\neq 0.
\]

These three levels should be clearly distinguished.

\begin{itemize}
\item The amplitudes \(g_{ij}\) retain the full complex transition data.
\item The probabilities \(p_{ij}\) retain only the moduli.
\item The phases \(u_{ij}\) retain only the phase information and define a \(U(1)\)-valued reciprocal comparison structure.
\end{itemize}

The main point of this note is that the last level is close to group-valued pairwise comparison theory, while the first one keeps track of additional non-unitary information through the moduli.
The interplay between these two levels will lead naturally to triangular defects, Bargmann invariants, and geometric phases.

\begin{remark}
Not every matrix in \(\PC_N(U(1))\) arises from a family of qubit states.
Indeed, realizability requires the existence of a positive semidefinite Hermitian Gram matrix of rank at most \(2\) whose normalized phase part is the prescribed matrix.
We shall return to this point later.
\end{remark}

\section{Triangular defects and Bargmann invariants}

\subsection{Triangular defects of phase comparisons}

Let
\[
U=(u_{ij})_{1\leq i,j\leq N}\in \PC_N(U(1))
\]
be the phase comparison matrix associated with a non-orthogonal family of normalized qubit representatives
\[
\psi_1,\dots,\psi_N\in \C^2.
\]
A first natural quantity attached to a triple of indices is the corresponding triangular defect.

\begin{definition}
Let \(i,j,k\in\{1,\dots,N\}\) be pairwise distinct.
The \emph{triangular defect} associated with the triple \((i,j,k)\) is
\[
\kappa_{ijk}:=u_{ij}u_{jk}u_{ki}\in U(1).
\]
\end{definition}

This quantity measures the failure of multiplicative coherence of the phase comparison matrix on the cycle \(i\to j\to k\to i\).
Indeed, if one had
\[
u_{ij}u_{jk}=u_{ik},
\]
then automatically
\[
\kappa_{ijk}=u_{ij}u_{jk}u_{ki}=u_{ik}u_{ki}=1.
\]

\begin{remark}
In the classical theory of multiplicative pairwise comparisons, the relation
\[
a_{ij}a_{jk}a_{ki}=1
\]
expresses consistency on the triple \((i,j,k)\).
The quantity \(\kappa_{ijk}\) is therefore the natural \(U(1)\)-valued analogue of a local inconsistency factor.
\end{remark}

The elementary symmetries of these defects follow directly from reciprocity.

\begin{lemma}
For every pairwise distinct \(i,j,k\),
\[
\kappa_{ikj}=\kappa_{ijk}^{-1}.
\]
More generally, cyclic permutations leave \(\kappa_{ijk}\) unchanged, while transpositions invert it.
\end{lemma}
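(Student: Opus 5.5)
The plan is to argue directly from the definition of \(\kappa_{ijk}\), using only two ingredients: reciprocity \(u_{ji}=u_{ij}^{-1}\), which holds because \(U\in\PC_N(U(1))\) by the proposition above, and commutativity of \(U(1)\). The statement is really a fact about \(\PC_N(G)\) for abelian \(G\), so no qubit structure is needed.

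First I would treat cyclic invariance. By definition,
\[
\kappa_{jki}=u_{jk}u_{ki}u_{ij}
\qquad\text{and}\qquad
\kappa_{kij}=u_{ki}u_{ij}u_{jk}.
\]
These are the same three factors as in \(\kappa_{ijk}=u_{ij}u_{jk}u_{ki}\), just in a different order. Since \(U(1)\) is abelian, the products are equal, so \(\kappa_{ijk}=\kappa_{jki}=\kappa_{kij}\).

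Next I would prove the displayed identity. Expanding and applying reciprocity to each factor gives
\[
\kappa_{ikj}=u_{ik}u_{kj}u_{ji}=u_{ki}^{-1}u_{jk}^{-1}u_{ij}^{-1}.
\]
Commutativity lets us reorder this to
\[
(u_{ij}u_{jk}u_{ki})^{-1}=\kappa_{ijk}^{-1}.
\]

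Finally, I would handle an arbitrary transposition. Every permutation of \(\{i,j,k\}\) is either cyclic or equals the reversal \((i,k,j)\) composed with a cyclic shift. Concretely, the three transpositions of \((i,j,k)\) give \((j,i,k)\), \((k,j,i)\) and \((i,k,j)\). The first two are cyclic shifts of \((i,k,j)\), so by the cyclic step and the reversal identity each of them yields \(\kappa_{ijk}^{-1}\).

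There is no real obstacle here. The only point to flag is that commutativity of \(U(1)\) is used essentially. For a nonabelian \(G\), cyclic shifts would only give conjugate elements.
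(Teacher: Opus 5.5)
Your proposal is correct and follows the paper's proof essentially verbatim: cyclic invariance comes from commutativity of \(U(1)\), and the reversal identity \(\kappa_{ikj}=\kappa_{ijk}^{-1}\) comes from reciprocity together with commutativity. Your extra step, which reduces the transpositions \((j,i,k)\) and \((k,j,i)\) to cyclic shifts of \((i,k,j)\), makes explicit what the paper leaves implicit in its ``more generally'' clause.
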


\begin{proof}
Since multiplication in \(U(1)\) is commutative,
\[
\kappa_{jki}=u_{jk}u_{ki}u_{ij}=\kappa_{ijk},
\qquad
\kappa_{kij}=u_{ki}u_{ij}u_{jk}=\kappa_{ijk}.
\]
On the other hand,
\[
\kappa_{ikj}=u_{ik}u_{kj}u_{ji}
=u_{ki}^{-1}u_{jk}^{-1}u_{ij}^{-1}
=(u_{ij}u_{jk}u_{ki})^{-1}
=\kappa_{ijk}^{-1}.
\qedhere
\]
\end{proof}

Thus the argument of \(\kappa_{ijk}\) changes sign under orientation reversal of the triangle.

\subsection{Bargmann invariants}

We now recall the elementary form of the Bargmann invariant associated with a triple of pure states.

\begin{definition}
Let \(\psi_i,\psi_j,\psi_k\in \C^2\) be normalized vectors.
The associated \emph{Bargmann invariant} is
\[
\Barg_{ijk}
:=
\ip{\psi_i}{\psi_j}\ip{\psi_j}{\psi_k}\ip{\psi_k}{\psi_i}
=
g_{ij}g_{jk}g_{ki}\in \C.
\]
\end{definition}

\begin{remark}
The quantity \(\Barg_{ijk}\) depends on the chosen normalized representatives \(\psi_i,\psi_j,\psi_k\), but only through a global phase-invariant combination.
In particular, it is unchanged when each \(\psi_\ell\) is multiplied by an arbitrary phase factor \(e^{i\theta_\ell}\).
\end{remark}

\begin{lemma}
The Bargmann invariant \(\Barg_{ijk}\) is invariant under the rephasing
\[
\psi_\ell \longmapsto e^{i\theta_\ell}\psi_\ell,
\qquad
\theta_\ell\in\R,
\]
of the representatives of the rays \([\psi_\ell]\in \CP^1\).
\end{lemma}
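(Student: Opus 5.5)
The plan is a direct computation using only the sesquilinearity of the Hermitian scalar product. I will fix the convention already implicit in the paper, namely that \(\ip{\cdot}{\cdot}\) is antilinear in the first argument and linear in the second. Under this convention,
\[
\ip{e^{i\theta_a}\psi_a}{e^{i\theta_b}\psi_b}=e^{-i\theta_a}e^{i\theta_b}\,\ip{\psi_a}{\psi_b}
\]
for all indices \(a,b\). So under the rephasing each amplitude transforms as \(g_{ab}\mapsto e^{i(\theta_b-\theta_a)}g_{ab}\).

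Next I substitute this transformation rule into the three factors of \(\Barg_{ijk}=g_{ij}g_{jk}g_{ki}\). The new value is
\[
e^{i(\theta_j-\theta_i)}e^{i(\theta_k-\theta_j)}e^{i(\theta_i-\theta_k)}\,g_{ij}g_{jk}g_{ki}.
\]
Because \(U(1)\) is commutative, the exponents add. Their sum telescopes to \(0\) around the closed cycle \(i\to j\to k\to i\), so the prefactor equals \(1\) and \(\Barg_{ijk}\) is unchanged.

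If the opposite convention (linear in the first slot) were used, every exponent would change sign. The sum would still telescope to zero, so the conclusion does not depend on the convention; I will note this in one line. I also note that normalization is not needed for this argument and that \(i,j,k\) need not be distinct.

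There is no genuine obstacle. The only point requiring care is the bookkeeping of which factor receives the conjugated phase. The essential reason the result holds is that the indices form a closed cycle in which each index appears exactly once as a first argument and once as a second argument. This is the same cocycle mechanism behind the coherence discussion for the triangular defects \(\kappa_{ijk}\).
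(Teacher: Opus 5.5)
Your proposal is correct and matches the paper's proof: each overlap picks up the factor \(e^{-i\theta_a}e^{i\theta_b}\) by sesquilinearity, and the three factors cancel around the closed cycle \(i\to j\to k\to i\). Your side remarks are also correct: the cancellation does not depend on which slot is antilinear, and it needs neither normalization nor distinct indices.
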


\begin{proof}
Under such a rephasing,
\[
\ip{\psi_i}{\psi_j}\longmapsto e^{-i\theta_i}e^{i\theta_j}\ip{\psi_i}{\psi_j},
\]
\[
\ip{\psi_j}{\psi_k}\longmapsto e^{-i\theta_j}e^{i\theta_k}\ip{\psi_j}{\psi_k},
\]
\[
\ip{\psi_k}{\psi_i}\longmapsto e^{-i\theta_k}e^{i\theta_i}\ip{\psi_k}{\psi_i}.
\]
Multiplying these three factors, all phases cancel:
\[
e^{-i\theta_i}e^{i\theta_j}
\cdot
e^{-i\theta_j}e^{i\theta_k}
\cdot
e^{-i\theta_k}e^{i\theta_i}
=1.
\]
Hence \(\Barg_{ijk}\) is unchanged.
\end{proof}

This is one of the basic reasons why \(\Barg_{ijk}\) is geometrically meaningful: it depends only on the rays, not on the chosen vectors representing them.

\subsection{Normalized Bargmann invariants}

When all pairwise overlaps involved in a triple are nonzero, the Bargmann invariant can be normalized to a phase.

\begin{definition}
Assume that
\[
\ip{\psi_i}{\psi_j}\neq 0,\qquad
\ip{\psi_j}{\psi_k}\neq 0,\qquad
\ip{\psi_k}{\psi_i}\neq 0.
\]
The corresponding \emph{normalized Bargmann invariant} is
\[
\widetilde{\Barg}_{ijk}
:=
\frac{\Barg_{ijk}}{|\Barg_{ijk}|}
=
\frac{\ip{\psi_i}{\psi_j}\ip{\psi_j}{\psi_k}\ip{\psi_k}{\psi_i}}
{|\ip{\psi_i}{\psi_j}\ip{\psi_j}{\psi_k}\ip{\psi_k}{\psi_i}|}
\in U(1).
\]
\end{definition}

Since
\[
|\Barg_{ijk}|
=
|g_{ij}g_{jk}g_{ki}|
=
|g_{ij}|\;|g_{jk}|\;|g_{ki}|,
\]
we immediately obtain the following identification.

\begin{proposition}\label{prop:kappa-bargmann}
For every triple \((i,j,k)\) such that all three overlaps are nonzero,
\[
\kappa_{ijk}
=
u_{ij}u_{jk}u_{ki}
=
\widetilde{\Barg}_{ijk}.
\]
\end{proposition}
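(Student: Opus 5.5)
The plan is a direct computation from the definitions; nothing beyond the multiplicativity of the modulus on \(\C\) is needed. Since all three overlaps \(g_{ij},g_{jk},g_{ki}\) are nonzero, the phases \(u_{ij},u_{jk},u_{ki}\) are well defined, and \(\Barg_{ijk}=g_{ij}g_{jk}g_{ki}\neq 0\). Hence \(\widetilde{\Barg}_{ijk}\) is also well defined.

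First I write each amplitude in polar form, \(g_{ab}=|g_{ab}|\,u_{ab}\) for \((a,b)\in\{(i,j),(j,k),(k,i)\}\). Multiplying the three factors and using commutativity in \(\C\) gives
\[
\Barg_{ijk}=|g_{ij}|\,|g_{jk}|\,|g_{ki}|\;u_{ij}u_{jk}u_{ki}
=|g_{ij}|\,|g_{jk}|\,|g_{ki}|\;\kappa_{ijk}.
\]

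Next I use the identity \(|\Barg_{ijk}|=|g_{ij}|\,|g_{jk}|\,|g_{ki}|\) recorded just before the statement. This is multiplicativity of the modulus, or equivalently the fact that \(|\kappa_{ijk}|=1\). Since this positive real number is nonzero, I divide the display above by it and obtain \(\widetilde{\Barg}_{ijk}=\kappa_{ijk}\).

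The only point needing care is the standing hypothesis. The phase matrix \(U\) was defined under global non-orthogonality, but here only the three overlaps of the triple are assumed nonzero. I will therefore define \(u_{ab}\) locally for the three pairs involved; the computation uses nothing else. As a consistency check, the previous lemma on invariance of \(\Barg_{ijk}\) under rephasing shows that \(\kappa_{ijk}\) is rephasing invariant as well, although each individual \(u_{ab}\) is not.
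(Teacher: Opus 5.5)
Your proposal is correct and is essentially the paper's proof: both write $g_{ab}=|g_{ab}|\,u_{ab}$ (equivalently $u_{ab}=g_{ab}/|g_{ab}|$), multiply the three factors, and use $|\Barg_{ijk}|=|g_{ij}|\,|g_{jk}|\,|g_{ki}|$ to identify $u_{ij}u_{jk}u_{ki}$ with $\Barg_{ijk}/|\Barg_{ijk}|$. Your remark that only the three overlaps of the triple need to be nonzero, with the phases defined locally, is a slightly more careful statement of the same computation. The paper itself later uses exactly this local version for partial triangular defects.
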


\begin{proof}
By definition,
\[
u_{ij}=\frac{g_{ij}}{|g_{ij}|},
\qquad
u_{jk}=\frac{g_{jk}}{|g_{jk}|},
\qquad
u_{ki}=\frac{g_{ki}}{|g_{ki}|}.
\]
Therefore
\[
u_{ij}u_{jk}u_{ki}
=
\frac{g_{ij}g_{jk}g_{ki}}{|g_{ij}||g_{jk}||g_{ki}|}
=
\frac{\Barg_{ijk}}{|\Barg_{ijk}|}
=
\widetilde{\Barg}_{ijk}.
\qedhere
\]
\end{proof}

This proposition is the main elementary observation of the paper:
the triangular inconsistency factor of the phase comparison matrix associated with a family of qubit states is exactly the normalized Bargmann invariant of the corresponding triple.

\subsection{Consequences and first interpretation}

The preceding identity has several immediate consequences.

\begin{corollary}
For every non-orthogonal triple \((i,j,k)\), the triangular defect \(\kappa_{ijk}\) depends only on the rays
\[
[\psi_i],[\psi_j],[\psi_k]\in \CP^1.
\]
\end{corollary}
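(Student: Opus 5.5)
The plan is to combine Proposition~\ref{prop:kappa-bargmann} with the rephasing invariance of the Bargmann invariant proved in the preceding lemma. Concretely, fix a non-orthogonal triple \((i,j,k)\). By Proposition~\ref{prop:kappa-bargmann} we have \(\kappa_{ijk}=\widetilde{\Barg}_{ijk}=\Barg_{ijk}/|\Barg_{ijk}|\). So it suffices to show that \(\widetilde{\Barg}_{ijk}\) depends only on the rays \([\psi_i],[\psi_j],[\psi_k]\).

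First I will note that two normalized representatives of the same ray differ by a phase factor, \(\psi_\ell'=e^{i\theta_\ell}\psi_\ell\), as recalled after the definition of pure qubit states. Replacing the representatives of the three rays by other normalized ones is therefore exactly a rephasing of the kind treated in the lemma on invariance of \(\Barg_{ijk}\). That lemma gives \(\Barg'_{ijk}=\Barg_{ijk}\), and hence also \(|\Barg'_{ijk}|=|\Barg_{ijk}|\).

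I also need to check that the hypothesis is itself ray-dependent. Non-orthogonality is preserved because \(|\ip{\psi_i'}{\psi_j'}|=|\ip{\psi_i}{\psi_j}|\), so \(\Barg_{ijk}\neq 0\) for any choice of representatives. Taking the quotient then gives \(\widetilde{\Barg}'_{ijk}=\widetilde{\Barg}_{ijk}\), and so \(\kappa'_{ijk}=\kappa_{ijk}\).

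I do not expect a real obstacle. The only subtle point is that the individual \(u_{ij}\) are \emph{not} ray-invariant: each one picks up \(e^{i(\theta_j-\theta_i)}\). The invariance appears only in the cyclic product, where these gauge factors cancel. As a sanity check, this cancellation can also be verified directly on \(u_{ij}u_{jk}u_{ki}\), mirroring the proof of the lemma.
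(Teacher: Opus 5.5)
Your proposal is correct and follows the paper's proof: you identify \(\kappa_{ijk}\) with \(\widetilde{\Barg}_{ijk}\) via Proposition~\ref{prop:kappa-bargmann}, then use the rephasing invariance of \(\Barg_{ijk}\). Your extra checks (that normalized representatives of a ray differ by a phase, and that non-orthogonality depends only on the rays) are details the paper leaves implicit.
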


\begin{proof}
By Proposition~\ref{prop:kappa-bargmann}, \(\kappa_{ijk}\) is equal to the normalized Bargmann invariant, which is invariant under independent rephasing of the three representatives.
\end{proof}

\begin{corollary}
For every non-orthogonal triple \((i,j,k)\),
\[
\kappa_{ijk}=1
\qquad\Longleftrightarrow\qquad
\Barg_{ijk}\in \R_{+}^{\times}.
\]
\end{corollary}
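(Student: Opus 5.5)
The plan is to reduce the statement directly to Proposition~\ref{prop:kappa-bargmann}, which identifies \(\kappa_{ijk}\) with the normalized Bargmann invariant \(\widetilde{\Barg}_{ijk}=\Barg_{ijk}/|\Barg_{ijk}|\). The claimed equivalence then becomes a statement about a single nonzero complex number \(z=\Barg_{ijk}\): namely that \(z/|z|=1\) if and only if \(z\in\R_{+}^{\times}\).

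First, I will check that \(\Barg_{ijk}\neq 0\) under the non-orthogonality assumption. Since
\[
|\Barg_{ijk}|=|g_{ij}|\,|g_{jk}|\,|g_{ki}|
\]
and each factor is nonzero, the normalization is well defined, and \(\widetilde{\Barg}_{ijk}\) makes sense. Second, I will prove the two implications for \(z\in\C^{\times}\).

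\begin{itemize}
\item If \(z\in\R_{+}^{\times}\), then \(|z|=z\), so \(z/|z|=1\).
\item Conversely, if \(z/|z|=1\), then \(z=|z|\), which is a strictly positive real number.
\end{itemize}

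Combining this with \(\kappa_{ijk}=\widetilde{\Barg}_{ijk}\) gives the corollary.

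There is no real obstacle. The only point requiring care is to state explicitly that \(\Barg_{ijk}\neq 0\). This makes the division legitimate, and it is also why the right-hand side is \(\R_{+}^{\times}\) rather than \(\R_{\geq 0}\). I may add the remark that, equivalently, \(\kappa_{ijk}=1\) exactly when \(\arg \Barg_{ijk}\equiv 0 \pmod{2\pi}\), which anticipates the geometric-phase interpretation.
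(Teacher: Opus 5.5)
Your proposal is correct and follows the paper's proof. Both arguments rest on the identity \(\kappa_{ijk}=\Barg_{ijk}/|\Barg_{ijk}|\) from Proposition~\ref{prop:kappa-bargmann}, and then observe that a nonzero complex number normalizes to \(1\) exactly when it is a positive real. Your explicit check that \(\Barg_{ijk}\neq 0\), which the paper leaves implicit, adds care but does not change the argument.
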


\begin{proof}
By definition,
\[
\kappa_{ijk}=\frac{\Barg_{ijk}}{|\Barg_{ijk}|}.
\]
Hence \(\kappa_{ijk}=1\) if and only if \(\Barg_{ijk}\) has argument \(0\) modulo \(2\pi\), that is, if and only if \(\Barg_{ijk}\) is a positive real number.
\end{proof}

So, in this setting, multiplicative coherence of the phase comparison data on a triple is equivalent to the vanishing of the corresponding Bargmann phase.

\begin{remark}
The quantity \(\kappa_{ijk}\) should therefore not be interpreted merely as a formal defect.
It carries genuine quantum-geometric information.
Its argument
\[
\arg(\kappa_{ijk})
=
\arg(\Barg_{ijk})
\]
is the phase naturally associated with the triple \(([\psi_i],[\psi_j],[\psi_k])\).
In the next section, we recall how this phase is related to geometric phase and to the geometry of the Bloch sphere.
\end{remark}

\subsection{A first non-unitary remark}

Although \(\kappa_{ijk}\) is phase-valued, it should be remembered that it comes from the full transition-amplitude data
\[
g_{ij}=\ip{\psi_i}{\psi_j},
\]
not only from their phases.

Indeed, the Bargmann invariant splits naturally into modulus and phase:
\[
\Barg_{ijk}
=
|g_{ij}|\;|g_{jk}|\;|g_{ki}|\;\kappa_{ijk}.
\]
Thus the full triple datum consists of:
\begin{enumerate}[label=\textup{(\roman*)}]
\item a positive amplitude factor
\[
|g_{ij}|\;|g_{jk}|\;|g_{ki}|,
\]
which depends on transition probabilities;

\item a phase factor
\[
\kappa_{ijk}\in U(1),
\]
which measures the corresponding triangular defect.
\end{enumerate}

For this reason, the \(U(1)\)-valued comparison structure should not be seen as replacing the amplitude data, but rather as extracting one distinguished part of it.
The comparison-theoretic viewpoint is therefore naturally compatible with both unitary and non-unitary aspects of the quantum transition structure.

\begin{center}
\[
\begin{tikzcd}[column sep=large,row sep=large]
&
\text{finite family of qubit states } (\psi_i)_{1\le i\le N}
\arrow[d, mapsto]
&
\\
&
\text{Gram matrix } G=(g_{ij}),\quad g_{ij}=\langle \psi_i,\psi_j\rangle
\arrow[dl, mapsto]
\arrow[dr, mapsto]
&
\\
\text{transition probabilities } P=(p_{ij}),\quad p_{ij}=|g_{ij}|^2
&&
\text{phase comparison matrix } U=(u_{ij}),\quad
u_{ij}=\dfrac{g_{ij}}{|g_{ij}|}
\arrow[d, mapsto]
\\
&&
\text{triangular defects } \kappa_{ijk}=u_{ij}u_{jk}u_{ki}
\arrow[d, leftrightarrow]
\\
&&
\text{normalized Bargmann invariants } \widetilde{\Barg}_{ijk}
\arrow[d, mapsto]
\\
&&
\text{geometric phases / Pancharatnam phases}
\end{tikzcd}
\]
\end{center}
\section{Geometric phase and the Bloch sphere}

\subsection{Pancharatnam phase of a triple}

Let
\[
\psi_i,\psi_j,\psi_k\in \C^2
\]
be normalized qubit representatives such that all pairwise overlaps are nonzero.
As recalled above, the associated Bargmann invariant is
\[
\Barg_{ijk}
=
\ip{\psi_i}{\psi_j}\ip{\psi_j}{\psi_k}\ip{\psi_k}{\psi_i}.
\]

\begin{definition}
The \emph{Pancharatnam phase} of the triple \(([\psi_i],[\psi_j],[\psi_k])\) is
\[
\gamma_{ijk}:=\arg(\Barg_{ijk})\in \mathbb{R}/2\pi\mathbb{Z}.
\]
\end{definition}

By Proposition~\ref{prop:kappa-bargmann}, this phase is exactly the argument of the triangular defect.

\begin{proposition}
For every non-orthogonal triple \((i,j,k)\),
\[
\gamma_{ijk}=\arg(\kappa_{ijk}) \qquad \text{in } \mathbb{R}/2\pi\mathbb{Z}.
\]
\end{proposition}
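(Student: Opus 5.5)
The claim follows directly from Proposition~\ref{prop:kappa-bargmann} and the polar decomposition of a nonzero complex number, so I will simply unwind the definitions.

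First, I will check that both sides are well defined in \(\mathbb{R}/2\pi\mathbb{Z}\). By the non-orthogonality assumption, \(g_{ij}, g_{jk}, g_{ki}\) are all nonzero. Hence \(\Barg_{ijk}=g_{ij}g_{jk}g_{ki}\neq 0\), and its argument is a well-defined class modulo \(2\pi\). Likewise \(\kappa_{ijk}\in U(1)\) is nonzero, so \(\arg(\kappa_{ijk})\) is well defined.

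Next, I will use the factorization recorded in the ``first non-unitary remark'':
\[
\Barg_{ijk}=|g_{ij}|\,|g_{jk}|\,|g_{ki}|\;\kappa_{ijk}.
\]
This is just Proposition~\ref{prop:kappa-bargmann} multiplied by \(|\Barg_{ijk}|=|g_{ij}||g_{jk}||g_{ki}|\). The argument map \(\C^\times\to\mathbb{R}/2\pi\mathbb{Z}\) is a group homomorphism and vanishes on \(\R_+^\times\). Applying it to the factorization therefore kills the positive real factor and gives
\[
\gamma_{ijk}=\arg(\Barg_{ijk})=\arg(\kappa_{ijk}).
\]
Equivalently, for any \(z\in\C^\times\) we have \(\arg z=\arg(z/|z|)\), so \(\arg\Barg_{ijk}=\arg\widetilde{\Barg}_{ijk}=\arg\kappa_{ijk}\).

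There is no real obstacle here. The only point needing care is to work modulo \(2\pi\) rather than with a principal branch, so that the homomorphism property of \(\arg\) holds without branch-cut corrections. The statement is already phrased in \(\mathbb{R}/2\pi\mathbb{Z}\), which handles this.
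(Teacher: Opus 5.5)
Your proof is correct and is essentially the paper's argument: the paper likewise writes \(\kappa_{ijk}=\Barg_{ijk}/|\Barg_{ijk}|\) via Proposition~\ref{prop:kappa-bargmann} and takes arguments modulo \(2\pi\). Your additional checks are sound but do not change the approach; these are that both sides are well defined and that \(\arg\colon \C^\times\to\mathbb{R}/2\pi\mathbb{Z}\) is a homomorphism vanishing on \(\R_+^\times\).
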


\begin{proof}
By Proposition~\ref{prop:kappa-bargmann},
\[
\kappa_{ijk}=\frac{\Barg_{ijk}}{|\Barg_{ijk}|}.
\]
Taking the argument of both sides gives
\[
\arg(\kappa_{ijk})=\arg(\Barg_{ijk})=\gamma_{ijk}
\qquad \text{mod }2\pi.
\qedhere
\]
\end{proof}

Thus the local inconsistency factor of the \(U(1)\)-valued pairwise comparison structure is precisely the Pancharatnam phase of the corresponding triple of rays.
This is the first sense in which pairwise-comparison defects acquire a genuinely quantum-geometric meaning \cite{Pancharatnam1956,Berry1984,SamuelBhandari1988,MukundaSimon1993a,MukundaSimon2003}.

\subsection{Bloch sphere representation}

Every pure qubit state may be represented by a rank-one orthogonal projector
\[
\rho_\psi=\ket{\psi}\bra{\psi}.
\]
Equivalently, there exists a unique vector
\[
n_\psi\in \Sphere^2\subset \R^3
\]
such that
\[
\rho_\psi=\frac12\bigl(I+n_\psi\cdot \bm{\sigma}\bigr),
\]
where \(\bm{\sigma}=(\sigma_1,\sigma_2,\sigma_3)\) denotes the Pauli matrices.
The map
\[
[\psi]\longmapsto n_\psi
\]
identifies \(\CP^1\) with the Bloch sphere.

Let us write
\[
n_i:=n_{\psi_i}\in \Sphere^2
\]
for the Bloch vector associated with \([\psi_i]\).

\begin{proposition}\label{prop:transition-prob-bloch}
For every normalized qubit states \(\psi_i,\psi_j\),
\[
|\ip{\psi_i}{\psi_j}|^2
=
\Tr(\rho_i\rho_j)
=
\frac{1+n_i\cdot n_j}{2}.
\]
\end{proposition}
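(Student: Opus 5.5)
The identity splits into two equalities, and I would prove them in turn. Throughout, \(\rho_\ell=\ket{\psi_\ell}\bra{\psi_\ell}\) for \(\ell=i,j\).

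\textbf{First equality.} This is a direct trace computation. Since \(\rho_i\rho_j=\ket{\psi_i}\ip{\psi_i}{\psi_j}\bra{\psi_j}\), cyclicity of the trace gives
\[
\Tr(\rho_i\rho_j)=\ip{\psi_i}{\psi_j}\ip{\psi_j}{\psi_i}=\ip{\psi_i}{\psi_j}\overline{\ip{\psi_i}{\psi_j}}=|\ip{\psi_i}{\psi_j}|^2 .
\]
I would note that this step uses only \(g_{ji}=\overline{g_{ij}}\), from the earlier proposition on the Gram matrix. I would also note that \(\rho_\psi\) is insensitive to rephasing of \(\psi\), so the statement is well defined on rays.

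\textbf{Second equality.} Insert the Bloch representations \(\rho_\ell=\tfrac12(I+n_\ell\cdot\bm{\sigma})\) and expand:
\[
\rho_i\rho_j=\tfrac14\bigl(I+n_i\cdot\bm{\sigma}+n_j\cdot\bm{\sigma}+(n_i\cdot\bm{\sigma})(n_j\cdot\bm{\sigma})\bigr).
\]
Then take the trace term by term, using three facts:
\begin{itemize}
\item \(\Tr I=2\);
\item \(\Tr\sigma_a=0\), so the two linear terms drop out;
\item \(\Tr(\sigma_a\sigma_b)=2\delta_{ab}\), which gives \(\Tr\bigl((n_i\cdot\bm{\sigma})(n_j\cdot\bm{\sigma})\bigr)=2\,n_i\cdot n_j\).
\end{itemize}
Collecting terms yields \(\tfrac14(2+2\,n_i\cdot n_j)=\tfrac{1+n_i\cdot n_j}{2}\).

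The last trace identity follows from the Pauli relation \(\sigma_a\sigma_b=\delta_{ab}I+i\epsilon_{abc}\sigma_c\): the antisymmetric part is a combination of traceless \(\sigma_c\) and contributes nothing. Equivalently, \((a\cdot\bm{\sigma})(b\cdot\bm{\sigma})=(a\cdot b)I+i(a\times b)\cdot\bm{\sigma}\).

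\textbf{Main obstacle.} There is no real difficulty. The only point needing care is justifying these Pauli trace identities and the existence and uniqueness of \(n_\psi\), which the paper takes as given in the Bloch identification. I would cite them as standard, or verify them by computing the products of the three explicit Pauli matrices.
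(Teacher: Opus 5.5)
Your proof is correct and follows the paper's route: you expand $\tfrac14\Tr\bigl((I+n_i\cdot\bm{\sigma})(I+n_j\cdot\bm{\sigma})\bigr)$ using $\Tr I=2$, $\Tr\sigma_\alpha=0$ and $\Tr(\sigma_\alpha\sigma_\beta)=2\delta_{\alpha\beta}$, and you identify $\Tr(\rho_i\rho_j)$ with $|\ip{\psi_i}{\psi_j}|^2$. Your cyclicity computation for the first equality just writes out a step the paper states without detail.
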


\begin{proof}
Using
\[
\rho_i=\frac12(I+n_i\cdot\bm{\sigma}),
\qquad
\rho_j=\frac12(I+n_j\cdot\bm{\sigma}),
\]
together with
\[
\Tr(I)=2,
\qquad
\Tr(\sigma_\alpha)=0,
\qquad
\Tr(\sigma_\alpha\sigma_\beta)=2\delta_{\alpha\beta},
\]
we obtain
\[
\Tr(\rho_i\rho_j)
=
\frac14\Tr\!\bigl((I+n_i\cdot\bm{\sigma})(I+n_j\cdot\bm{\sigma})\bigr)
=
\frac14(2+2\,n_i\cdot n_j)
=
\frac{1+n_i\cdot n_j}{2}.
\]
Since \(\rho_i=\ket{\psi_i}\bra{\psi_i}\) and \(\rho_j=\ket{\psi_j}\bra{\psi_j}\),
\[
\Tr(\rho_i\rho_j)=|\ip{\psi_i}{\psi_j}|^2.
\qedhere
\]
\end{proof}

This formula shows that transition probabilities between qubit states are encoded directly by the Euclidean geometry of the Bloch sphere.

\subsection{Bargmann invariants in Bloch coordinates}

The triple product has an equally natural expression in terms of Bloch vectors.

\begin{proposition}\label{prop:bargmann-bloch}
Let \(\psi_i,\psi_j,\psi_k\) be normalized qubit states, with associated Bloch vectors \(n_i,n_j,n_k\in \Sphere^2\).
Then
\[
\Barg_{ijk}
=
\Tr(\rho_i\rho_j\rho_k)
=
\frac14\Bigl(
1+n_i\cdot n_j+n_j\cdot n_k+n_k\cdot n_i
+i\, n_i\cdot (n_j\times n_k)
\Bigr).
\]
\end{proposition}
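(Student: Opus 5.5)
The plan has two steps. First we identify \(\Barg_{ijk}\) with a trace of a product of projectors. Then we expand that trace in the Pauli basis, exactly as in the proof of Proposition~\ref{prop:transition-prob-bloch}, but with one more factor.

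\emph{Step 1: the trace identity.} With \(\rho_\ell=\ket{\psi_\ell}\bra{\psi_\ell}\) and the convention (used in the rephasing lemma) that \(\ip{\cdot}{\cdot}\) is antilinear in the first slot, we write
\[
\rho_i\rho_j\rho_k=\ket{\psi_i}\ip{\psi_i}{\psi_j}\ip{\psi_j}{\psi_k}\bra{\psi_k}.
\]
Taking the trace and using \(\Tr(\ket{\psi_i}\bra{\psi_k})=\ip{\psi_k}{\psi_i}\) gives
\[
\Tr(\rho_i\rho_j\rho_k)=g_{ij}g_{jk}g_{ki}=\Barg_{ijk}.
\]
The ordering is the only point requiring care: the cyclic order \(i\to j\to k\) in the trace must match the order of the overlaps in \(\Barg_{ijk}\). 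This order also fixes the sign of the imaginary part in Step 2.

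\emph{Step 2: Pauli expansion.} We substitute \(\rho_\ell=\frac12(I+n_\ell\cdot\bm{\sigma})\) and expand the product of the three factors \((I+n_\ell\cdot\bm\sigma)\) into eight terms, with an overall factor \(\frac18\). The terms are handled as follows.
\begin{enumerate}[label=\textup{(\alph*)}]
\item The identity term contributes \(\Tr(I)=2\).
\item The three terms linear in a single \(n_\ell\cdot\bm\sigma\) vanish, since \(\Tr(\sigma_\alpha)=0\).
\item The three quadratic terms contribute \(2\,n_i\cdot n_j\), \(2\,n_j\cdot n_k\) and \(2\,n_i\cdot n_k\), by \(\Tr(\sigma_\alpha\sigma_\beta)=2\delta_{\alpha\beta}\). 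The order within each product does not matter here.
\item The cubic term \(\Tr\bigl((n_i\cdot\bm\sigma)(n_j\cdot\bm\sigma)(n_k\cdot\bm\sigma)\bigr)\) is the one genuinely new ingredient.
\end{enumerate}

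\emph{The cubic term.} This is the only step beyond the earlier proof, though still routine. We use the Pauli product rule
\[
(a\cdot\bm\sigma)(b\cdot\bm\sigma)=(a\cdot b)\,I+i\,(a\times b)\cdot\bm\sigma,
\]
which follows from \(\sigma_\alpha\sigma_\beta=\delta_{\alpha\beta}I+i\varepsilon_{\alpha\beta\gamma}\sigma_\gamma\). Multiplying on the right by \(n_k\cdot\bm\sigma\) and taking the trace, the \(I\)-part becomes traceless. The remaining part gives
\[
2i\,(n_i\times n_j)\cdot n_k=2i\,n_i\cdot(n_j\times n_k).
\]

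\emph{Collecting terms.} Summing the contributions gives
\[
\tfrac18\bigl(2+2(n_i\cdot n_j+n_j\cdot n_k+n_k\cdot n_i)+2i\,n_i\cdot(n_j\times n_k)\bigr),
\]
which is the claimed formula. The main place where an error could creep in is the sign of the imaginary term. That sign is fixed by combining the conjugation convention of Step 1 with the orientation \(\varepsilon_{123}=+1\) implicit in \(\sigma_1\sigma_2=i\sigma_3\).
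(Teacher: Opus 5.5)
Your proof is correct and follows the paper's argument: you identify $\Barg_{ijk}$ with $\Tr(\rho_i\rho_j\rho_k)$ via $\rho_\ell=\ket{\psi_\ell}\bra{\psi_\ell}$, then expand the product in the Pauli basis. The only difference is cosmetic: you derive the cubic trace from the product rule $(a\cdot\bm\sigma)(b\cdot\bm\sigma)=(a\cdot b)I+i(a\times b)\cdot\bm\sigma$, whereas the paper quotes $\Tr(\sigma_\alpha\sigma_\beta\sigma_\gamma)=2i\,\varepsilon_{\alpha\beta\gamma}$ directly.
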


\begin{proof}
Since \(\rho_\ell=\ket{\psi_\ell}\bra{\psi_\ell}\), one has
\[
\Tr(\rho_i\rho_j\rho_k)
=
\ip{\psi_i}{\psi_j}\ip{\psi_j}{\psi_k}\ip{\psi_k}{\psi_i}
=
\Barg_{ijk}.
\]
It remains to compute the trace in Bloch coordinates.
Using
\[
\rho_\ell=\frac12(I+n_\ell\cdot\bm{\sigma}),
\]
we obtain
\[
\Tr(\rho_i\rho_j\rho_k)
=
\frac18\Tr\!\Bigl(
(I+n_i\cdot\bm{\sigma})
(I+n_j\cdot\bm{\sigma})
(I+n_k\cdot\bm{\sigma})
\Bigr).
\]
Now we use the standard Pauli identities
\[
\Tr(\sigma_\alpha)=0,
\qquad
\Tr(\sigma_\alpha\sigma_\beta)=2\delta_{\alpha\beta},
\qquad
\Tr(\sigma_\alpha\sigma_\beta\sigma_\gamma)=2i\,\varepsilon_{\alpha\beta\gamma}.
\]
Expanding the product and taking traces yields
\[
\Tr(\rho_i\rho_j\rho_k)
=
\frac18\Bigl(
2
+
2\,n_i\cdot n_j
+
2\,n_j\cdot n_k
+
2\,n_k\cdot n_i
+
2i\, n_i\cdot (n_j\times n_k)
\Bigr),
\]
hence
\[
\Tr(\rho_i\rho_j\rho_k)
=
\frac14\Bigl(
1+n_i\cdot n_j+n_j\cdot n_k+n_k\cdot n_i
+i\, n_i\cdot (n_j\times n_k)
\Bigr).
\]
This proves the claim.
\end{proof}

\begin{corollary}\label{cor:arg-bargmann-bloch}
For every non-orthogonal triple \((i,j,k)\),
\[
\arg(\kappa_{ijk})
=
\arg(\Barg_{ijk})
=
\arg\!\Bigl(
1+n_i\cdot n_j+n_j\cdot n_k+n_k\cdot n_i
+i\, n_i\cdot (n_j\times n_k)
\Bigr).
\]
\end{corollary}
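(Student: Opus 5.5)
The corollary follows by chaining two earlier results and then discarding a positive real factor. First, Proposition~\ref{prop:kappa-bargmann} gives \(\kappa_{ijk}=\Barg_{ijk}/|\Barg_{ijk}|\) for every non-orthogonal triple. Taking arguments modulo \(2\pi\), as in the proposition identifying \(\gamma_{ijk}\) with \(\arg(\kappa_{ijk})\), yields \(\arg(\kappa_{ijk})=\arg(\Barg_{ijk})\), because dividing by a positive real number leaves the argument unchanged.

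For the second equality, I will invoke Proposition~\ref{prop:bargmann-bloch}, which writes
\[
\Barg_{ijk}=\tfrac14\,Z,\qquad
Z:=1+n_i\cdot n_j+n_j\cdot n_k+n_k\cdot n_i+i\,n_i\cdot(n_j\times n_k).
\]
Since \(\tfrac14>0\), we have \(\arg(\Barg_{ijk})=\arg(Z)\) in \(\R/2\pi\mathbb{Z}\).

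The only point needing care is that \(\arg(Z)\) is well defined, that is, \(Z\neq 0\). This follows from the non-orthogonality hypothesis: \(|\Barg_{ijk}|=|g_{ij}||g_{jk}||g_{ki}|>0\), so \(Z=4\Barg_{ijk}\neq 0\). No further obstacle arises, since everything reduces to the invariance of the argument under multiplication by positive reals.
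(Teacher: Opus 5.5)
Your proposal is correct and follows the paper's proof: you combine Proposition~\ref{prop:kappa-bargmann} with Proposition~\ref{prop:bargmann-bloch} and note that the positive factor \(\tfrac14\) does not change the argument. Your extra check that \(Z=4\Barg_{ijk}\neq 0\) under non-orthogonality, so that \(\arg(Z)\) is well defined, is a small point the paper leaves implicit.
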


\begin{proof}
By Proposition~\ref{prop:bargmann-bloch}, the factor \(1/4\) is real and positive, so it does not affect the argument.
The conclusion then follows from Proposition~\ref{prop:kappa-bargmann}.
\end{proof}

This formula already shows that the phase of the triangular defect is determined by the oriented geometry of the triple \((n_i,n_j,n_k)\) on the Bloch sphere.

\subsection{Geodesic triangles and geometric phase}

Let \(n_i,n_j,n_k\in \Sphere^2\) be pairwise non-antipodal points.
They determine a geodesic triangle on the Bloch sphere, whose oriented solid angle we denote by
\[
\Omega_{ijk}.
\]

The relation between Bargmann invariants and spherical geometry is classical: up to sign convention, the argument of the Bargmann invariant is equal to one half of the oriented solid angle of the corresponding geodesic triangle \cite{Pancharatnam1956,SamuelBhandari1988,MukundaSimon1993a,MukundaSimon2003,BengtssonZyczkowski2017}.

\begin{theorem}\label{thm:solid-angle}
With the standard Pancharatnam--Berry convention,
\[
\arg(\kappa_{ijk})
=
\arg(\Barg_{ijk})
=
-\frac{\Omega_{ijk}}{2}
\qquad \text{mod }2\pi.
\]
\end{theorem}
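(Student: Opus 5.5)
The plan is to reduce the statement to a purely spherical-geometric identity, using the Bloch-coordinate formula already established. By Proposition~\ref{prop:kappa-bargmann} and Corollary~\ref{cor:arg-bargmann-bloch},
\[
\arg(\kappa_{ijk})=\arg(\Barg_{ijk})=\arg\bigl(Z_{ijk}\bigr),
\qquad
Z_{ijk}:=1+n_i\cdot n_j+n_j\cdot n_k+n_k\cdot n_i+i\,n_i\cdot(n_j\times n_k).
\]
So it suffices to show that \(\arg(Z_{ijk})\) equals \(\mp\Omega_{ijk}/2\). The sign is fixed by the orientation convention attached to \(\Omega_{ijk}\): with the Pancharatnam--Berry convention the triangle \(i\to j\to k\) is oriented so that the sign in the statement is \(-\).

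\textbf{Step 1: the quantity \(Z_{ijk}\) is nonzero.} From Proposition~\ref{prop:transition-prob-bloch}, \(|\Barg_{ijk}|^2=p_{ij}p_{jk}p_{ki}\). Together with Proposition~\ref{prop:bargmann-bloch}, this gives
\[
|Z_{ijk}|^2=16\,|\Barg_{ijk}|^2=2(1+n_i\cdot n_j)(1+n_j\cdot n_k)(1+n_k\cdot n_i).
\]
This is strictly positive exactly when the points are pairwise non-antipodal, which is the non-orthogonality hypothesis. Hence \(\arg(Z_{ijk})\) is well defined.

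\textbf{Step 2: the Van Oosterom--Strackee formula.} I would prove that, for a geodesic triangle with unit vertices \(a,b,c\), the solid angle satisfies
\[
\tan\frac{\Omega}{2}=\frac{a\cdot(b\times c)}{1+a\cdot b+b\cdot c+c\cdot a},
\]
with the stronger quadrant statement \(e^{i\Omega/2}=Z/|Z|\). The route is a rigidity reduction. Both sides are invariant under \(SO(3)\), which acts on Bloch vectors through conjugation by \(SU(2)\). One may therefore place \(a\) at the north pole and \(b\) on a fixed meridian. The solid angle then follows from Girard's theorem, \(\Omega=\alpha+\beta+\gamma-\pi\), computed through spherical trigonometry.

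An alternative I find cleaner is to pass back to spinors, using that \(\Barg\) is invariant under rephasing (the earlier lemma). Choose representatives \(\psi_i,\psi_j,\psi_k\) so that \(\ip{\psi_i}{\psi_j}\) and \(\ip{\psi_j}{\psi_k}\) are real and positive. These are parallel-transported (horizontal) lifts along the geodesic edges \(i\to j\) and \(j\to k\). Then \(\arg\ip{\psi_k}{\psi_i}\) is the holonomy of the Berry connection around the closed geodesic triangle. The Berry curvature on \(\CP^1\) is \(\mp\tfrac12\) times the area form of \(\Sphere^2\), so Stokes' theorem gives \(\mp\Omega/2\). This second route uses the standard fact that horizontal lifts of geodesics in the Fubini--Study metric have real positive consecutive overlaps.

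\textbf{Main obstacle: the mod \(2\pi\) matching.} A tangent identity only determines \(\Omega/2\) modulo \(\pi\), so the branch has to be controlled. The argument I would use is one of continuity. For a fixed orientation, \(\Omega_{ijk}\) ranges over \((-2\pi,2\pi)\) on the connected space of non-degenerate pairwise non-antipodal triples. Both \(e^{-i\Omega/2}\) and \(Z/|Z|\) are continuous there, and \(Z\neq 0\) by Step~1. The two maps agree up to sign, by the tangent identity and equality of moduli, and they coincide on degenerate triangles, where \(\Omega=0\) and \(Z>0\). Hence they agree everywhere, which yields the statement with the stated sign convention.
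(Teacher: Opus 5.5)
The paper gives no proof of Theorem~\ref{thm:solid-angle}. It quotes the result as classical and refers to the geometric-phase literature, so your proposal is a genuinely different and more informative route: via Corollary~\ref{cor:arg-bargmann-bloch} you reduce the statement to the Van Oosterom--Strackee identity of spherical geometry.

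The reduction is sound:
\begin{itemize}
\item $Z_{ijk}=4\Barg_{ijk}$ has exactly the right numerator and denominator.
\item Your Step~1 identity $|Z_{ijk}|^2=2(1+n_i\cdot n_j)(1+n_j\cdot n_k)(1+n_k\cdot n_i)$ is correct.
\item The continuity argument works on $\{n_i\cdot(n_j\times n_k)>0\}$. This set is connected, being the image of $GL^+(3,\R)$ under normalization of columns.
\end{itemize}
Your route buys self-containedness and a sign that can actually be checked; the citation buys only brevity. Route~B is the more economical option. A holonomy is an element of $U(1)$ equal to the exponential of the curvature integral over the disk-shaped triangle, so no branch problem arises.

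Route~A, as sketched, leaves the actual computation undone, but it too can dispense with continuity. Write $\cos\ell_{ij}=n_i\cdot n_j$ for the side lengths. Step~1 then says $|Z_{ijk}|=4\cos\frac{\ell_{ij}}{2}\cos\frac{\ell_{jk}}{2}\cos\frac{\ell_{ki}}{2}$. This is exactly the denominator of the classical half-excess formula for the area $E=|\Omega_{ijk}|$:
\[
\cos\frac{E}{2}=\frac{1+\cos\ell_{ij}+\cos\ell_{jk}+\cos\ell_{ki}}{4\cos\frac{\ell_{ij}}{2}\cos\frac{\ell_{jk}}{2}\cos\frac{\ell_{ki}}{2}}.
\]
Hence $\cos\frac{E}{2}=\operatorname{Re}Z_{ijk}/|Z_{ijk}|$, which determines $E/2\in(0,\pi)$ uniquely. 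The sign of $\operatorname{Im}Z_{ijk}$ then supplies the orientation.

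Two points need repair.

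\textbf{The sign is left floating, and the final step is inconsistent as written.} Your identity $e^{i\Omega/2}=Z/|Z|$ holds when $\Omega_{ijk}$ has the sign of $n_i\cdot(n_j\times n_k)$, i.e.\ is positive when $i\to j\to k$ runs counterclockwise seen from outside the sphere. For that same $\Omega$, $e^{-i\Omega/2}$ is the complex conjugate of $Z/|Z|$, and in general not equal to it up to sign. So ``by the tangent identity'' does not justify your comparison.
\begin{itemize}
\item State explicitly that with the outward orientation $\arg\Barg_{ijk}=+\Omega_{ijk}/2$. This is already visible for small positively oriented triangles, where Proposition~\ref{prop:bargmann-bloch} gives $\operatorname{Im}\Barg_{ijk}>0$ and $\operatorname{Re}\Barg_{ijk}>0$.
\item The theorem's minus sign then holds exactly for the reversed orientation. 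Invoking ``the Pancharatnam--Berry convention'' as whichever orientation makes the sign negative is circular.
\item The same issue appears in Route~B. If the horizontal lift returns to $h\psi_i$ with $\ip{\psi_k}{h\psi_i}>0$, then $\arg\ip{\psi_k}{\psi_i}=-\arg h$. That is the inverse of the holonomy, not the holonomy.
\end{itemize}

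\textbf{Not every degenerate triple has $\Omega=0$ and $Z>0$.} Take three points at mutual distance $2\pi/3$ on a great circle. They are pairwise non-antipodal, give $Z=1+3\cos(2\pi/3)=-\tfrac12$, and their minor arcs bound a hemisphere. Your argument needs only one base point, so approaching a triple with one vertex on the minor arc between the other two suffices. The claim should be restricted to that case.
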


\begin{remark}
The sign depends on the orientation convention chosen for the geodesic triangle and for the geometric phase.
Some authors write \(+\Omega_{ijk}/2\) instead of \(-\Omega_{ijk}/2\).
For the purposes of the present paper, the important point is that the triangular defect is exactly the exponential of one half of the oriented solid angle.
\end{remark}

\begin{corollary}
For every non-orthogonal triple \((i,j,k)\),
\[
\kappa_{ijk}
=
\exp\!\left(-\frac{i}{2}\Omega_{ijk}\right)
\]
up to the chosen sign convention for \(\Omega_{ijk}\).
\end{corollary}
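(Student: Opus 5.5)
The statement is the final Corollary: \(\kappa_{ijk}=\exp(-\tfrac{i}{2}\Omega_{ijk})\), up to the sign convention for \(\Omega_{ijk}\). I would deduce it from Theorem~\ref{thm:solid-angle} together with the fact that \(\kappa_{ijk}\) has modulus one.

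\textbf{Step 1.} Recall that \(\kappa_{ijk}=u_{ij}u_{jk}u_{ki}\) is a product of elements of \(U(1)\). Hence \(|\kappa_{ijk}|=1\), and
\[
\kappa_{ijk}=\exp\bigl(i\arg(\kappa_{ijk})\bigr),
\]
where \(\arg\) is taken in \(\mathbb{R}/2\pi\mathbb{Z}\). Equivalently, Proposition~\ref{prop:kappa-bargmann} gives \(\kappa_{ijk}=\widetilde{\Barg}_{ijk}\), which is a unit complex number.

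\textbf{Step 2.} Insert the conclusion of Theorem~\ref{thm:solid-angle}, namely \(\arg(\kappa_{ijk})=-\Omega_{ijk}/2\) mod \(2\pi\). The exponential \(\theta\mapsto e^{i\theta}\) is well defined on \(\mathbb{R}/2\pi\mathbb{Z}\), so the equality mod \(2\pi\) passes to
\[
\kappa_{ijk}=\exp\!\left(-\tfrac{i}{2}\Omega_{ijk}\right).
\]
Changing the orientation convention replaces \(\Omega_{ijk}\) by \(-\Omega_{ijk}\), which accounts for the phrase ``up to sign''.

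\textbf{Main obstacle.} The delicate point is a hypothesis mismatch. The corollary assumes non-orthogonality, i.e.\ all three overlaps are nonzero. By Proposition~\ref{prop:transition-prob-bloch}, a zero overlap is the same as \(n_i\cdot n_j=-1\), so non-orthogonality means exactly that the Bloch vectors are pairwise non-antipodal. This is the hypothesis under which \(\Omega_{ijk}\) and Theorem~\ref{thm:solid-angle} were set up, so the theorem applies.

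A second subtlety is that \(\Omega_{ijk}\) is defined mod \(4\pi\), whereas the theorem's equality is stated mod \(2\pi\) on \(\arg\). An ambiguity of \(4\pi\) in \(\Omega_{ijk}\) shifts \(-\Omega_{ijk}/2\) by \(2\pi\), which leaves the exponential unchanged. So the statement is well posed, and no further computation is needed beyond invoking the theorem.
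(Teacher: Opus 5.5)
Your proposal is correct and matches how the paper obtains this corollary: it is Theorem~\ref{thm:solid-angle} exponentiated, using only that $\kappa_{ijk}\in U(1)$, so $\kappa_{ijk}=e^{i\arg(\kappa_{ijk})}$ with $\arg$ well defined in $\mathbb{R}/2\pi\mathbb{Z}$. Your two extra checks are also correct and make explicit what the paper leaves implicit: non-orthogonality $\Leftrightarrow$ pairwise non-antipodal Bloch vectors (via Proposition~\ref{prop:transition-prob-bloch}), and the $4\pi$ ambiguity in $\Omega_{ijk}$ being harmless after exponentiation.
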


Thus the \(U(1)\)-valued inconsistency factor is not merely a formal defect: it is the holonomy-type phase associated with the corresponding spherical triangle.

\subsection{Interpretation}

The preceding discussion gives a simple geometric interpretation of pairwise-comparison defects in the qubit case.

At the probabilistic level, the quantities
\[
p_{ij}=|\ip{\psi_i}{\psi_j}|^2
\]
record transition probabilities and are determined by pairwise scalar products of Bloch vectors.
At the phase level, the quantities
\[
u_{ij}=\frac{\ip{\psi_i}{\psi_j}}{|\ip{\psi_i}{\psi_j}|}
\]
define a reciprocal \(U(1)\)-valued comparison matrix.
Its triangular defects
\[
\kappa_{ijk}=u_{ij}u_{jk}u_{ki}
\]
are exactly normalized Bargmann invariants, hence geometric phases.

In other words, a local inconsistency-type quantity in pairwise comparison theory becomes, for qubit states, a projective invariant of quantum geometry.
This is the basic bridge developed in the present note.

\begin{remark}
At this level, the picture remains essentially kinematic.
No dynamics, Hamiltonian evolution, or measurement protocol is required.
The relation between triangular defects and geometric phase is already present at the level of finite families of rays in \(\CP^1\).
\end{remark}

\section{Realizability, incomplete comparisons, and non-unitary aspects}

\subsection{Realizability through Gram matrices}

The preceding sections show that a non-orthogonal family of qubit states gives rise to a \(U(1)\)-valued reciprocal pairwise comparison matrix.
A natural converse question is the following: when does a prescribed set of pairwise transition data come from a family of qubit states?

At the level of full amplitudes, the answer is classical and is governed by Gram matrices.

\begin{proposition}\label{prop:gram-characterization}
Let
\[
G=(g_{ij})_{1\leq i,j\leq N}\in M_N(\C).
\]
Then \(G\) is the Gram matrix of a family of normalized qubit vectors
\[
\psi_1,\dots,\psi_N\in \C^2
\]
if and only if the following conditions hold:
\begin{enumerate}[label=\textup{(\roman*)}]
\item \(G\) is Hermitian;
\item \(G\) is positive semidefinite;
\item \(g_{ii}=1\) for all \(i\);
\item \(\rank(G)\leq 2\).
\end{enumerate}
\end{proposition}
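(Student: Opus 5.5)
The necessity direction is already contained in the earlier proposition on the basic properties of the Gram matrix: if \(g_{ij}=\ip{\psi_i}{\psi_j}\) for normalized \(\psi_i\in\C^2\), then \(g_{ii}=1\), \(g_{ji}=\overline{g_{ij}}\) (so \(G\) is Hermitian), \(G\) is positive semidefinite, and \(\rank(G)\le 2\). I would simply invoke that proposition, so the work lies in the converse.

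For sufficiency, the plan is to factor \(G\) through \(\C^2\) using the spectral theorem. Since \(G\) is Hermitian and positive semidefinite, there is a unitary \(V\in U(N)\) with
\[
G=V\,\mathrm{diag}(\lambda_1,\dots,\lambda_N)\,V^*,
\qquad \lambda_1\ge\lambda_2\ge\dots\ge\lambda_N\ge 0 .
\]
The condition \(\rank(G)\le 2\) forces \(\lambda_3=\dots=\lambda_N=0\). Define the \(2\times N\) matrix
\[
\Psi:=\mathrm{diag}\bigl(\sqrt{\lambda_1},\sqrt{\lambda_2}\bigr)\,\bigl(V^*\bigr)_{\{1,2\},\,\cdot},
\]
that is, the first two rows of \(V^*\) rescaled by the square roots of the two largest eigenvalues. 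A direct check gives \(\Psi^*\Psi=G\). We then let \(\psi_j\in\C^2\) be the \(j\)-th column of \(\Psi\). With the inner product taken antilinear in the first slot, as in the paper's positivity computation, we get \(\ip{\psi_i}{\psi_j}=(\Psi^*\Psi)_{ij}=g_{ij}\). Finally, \(\|\psi_i\|^2=g_{ii}=1\) by condition (iii), so the vectors are normalized.

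I expect no genuine obstacle; the argument is standard. The only points needing care are:
\begin{itemize}
\item the convention for which slot of the inner product is conjugate-linear, which I would match to the one used in the earlier positivity proof, transposing or conjugating \(\Psi\) if needed;
\item the degenerate case \(\rank(G)<2\), for instance rank \(1\), where \(\lambda_2=0\) and the construction still works with a zero second row;
\item the case \(N=1\), which is trivial.
\end{itemize}
Equivalently, and this is the form I might present as more conceptual, one can use a square root \(G=G^{1/2}G^{1/2}\) whose columns span a subspace of \(\C^N\) of dimension at most \(2\), and then identify that subspace isometrically with a subspace of \(\C^2\).

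I would also remark that the realizing family is unique up to a global unitary \(W\in U(2)\). The factorization \(\Psi\) is determined up to left multiplication by a unitary on the range, which reflects the fact that a Gram matrix only records the relative geometry of the vectors.
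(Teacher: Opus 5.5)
Your proof is correct and follows the paper's route: necessity from the earlier proposition, and sufficiency via a factorization $G=\Psi^\ast\Psi$ with $\Psi\in M_{2\times N}(\C)$, whose columns are unit vectors by $g_{ii}=1$. The only difference is that you construct this factorization explicitly from the spectral decomposition (and check the inner-product convention), whereas the paper just asserts that it exists because $G$ is positive semidefinite of rank at most $2$.
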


\begin{proof}
If \(G\) is the Gram matrix of a family \((\psi_i)\subset \C^2\), then properties \textup{(i)}--\textup{(iv)} were established in Section~2.

Conversely, assume that \textup{(i)}--\textup{(iv)} hold.
Since \(G\) is positive semidefinite of rank at most \(2\), there exists a matrix
\[
V\in M_{2\times N}(\C)
\]
such that
\[
G=V^\ast V.
\]
Let \(v_i\in \C^2\) be the \(i\)-th column of \(V\).
Then
\[
g_{ij}=\ip{v_i}{v_j}
\]
for all \(i,j\).
Moreover,
\[
\|v_i\|^2=g_{ii}=1,
\]
so the vectors \(v_i\) are normalized.
Setting \(\psi_i:=v_i\) gives the required family of normalized qubit vectors.
\end{proof}

This immediately yields a realizability criterion for phase comparison matrices.

\begin{corollary}\label{cor:phase-realizability}
A matrix
\[
U=(u_{ij})\in \PC_N(U(1))
\]
is realizable by a non-orthogonal family of pure qubit states if and only if there exists a Hermitian positive semidefinite matrix
\[
G=(g_{ij})\in M_N(\C)
\]
such that:
\begin{enumerate}[label=\textup{(\roman*)}]
\item \(g_{ii}=1\) for all \(i\);
\item \(\rank(G)\leq 2\);
\item \(g_{ij}\neq 0\) for all \(i,j\);
\item
\[
\frac{g_{ij}}{|g_{ij}|}=u_{ij}
\qquad
\text{for all } i,j.
\]
\end{enumerate}
\end{corollary}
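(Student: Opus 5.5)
The plan is to reduce Corollary~\ref{cor:phase-realizability} directly to Proposition~\ref{prop:gram-characterization}, proving the two implications separately. By ``realizable by a non-orthogonal family'' we mean that there exist normalized vectors \(\psi_1,\dots,\psi_N\in\C^2\) with \(\ip{\psi_i}{\psi_j}\neq 0\) for all \(i,j\), whose phase comparison matrix is exactly \(U\).

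For the forward direction, suppose such a family exists. We take \(G\) to be its Gram matrix, \(g_{ij}=\ip{\psi_i}{\psi_j}\). The proposition of Section~2 on Gram matrices gives that \(G\) is Hermitian and positive semidefinite, that \(g_{ii}=1\), and that \(\rank(G)\le 2\). This settles (i) and (ii). Non-orthogonality is exactly (iii). By the definition of the phase comparison matrix, \(u_{ij}=g_{ij}/|g_{ij}|\), which is (iv).

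For the converse, let \(G\) satisfy the listed conditions. Since it is Hermitian, positive semidefinite, has unit diagonal and rank at most \(2\), Proposition~\ref{prop:gram-characterization} yields normalized vectors \(\psi_i\in\C^2\) with \(\ip{\psi_i}{\psi_j}=g_{ij}\). Condition (iii) says this family is non-orthogonal, so its phase comparison matrix is defined. Condition (iv) says that this phase comparison matrix equals \(U\). Hence \(U\) is realized.

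There is no genuine obstacle here; the argument is bookkeeping. The one point to check is the consistency of the definition when \(i=j\). Both \(u_{ii}=1\) and \(g_{ii}/|g_{ii}|=1\), so condition (iv) on the diagonal imposes nothing new and agrees with \(U\in\PC_N(U(1))\). We also note that the reciprocity \(u_{ji}=u_{ij}^{-1}\) follows automatically from the Hermitian symmetry of \(G\), so it adds no further constraint.
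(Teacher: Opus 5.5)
Your proof is correct and matches the paper's argument. The forward direction takes the Gram matrix of the realizing family, and the converse invokes Proposition~\ref{prop:gram-characterization} to produce the vectors, with (iii) and (iv) supplying non-orthogonality and the prescribed phase matrix; your remarks on the diagonal and on reciprocity are harmless consistency checks that the paper leaves implicit.
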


\begin{proof}
If \(U\) comes from a non-orthogonal qubit family, take \(G\) to be its Gram matrix.
Conversely, if such a matrix \(G\) exists, Proposition~\ref{prop:gram-characterization} provides a family of normalized qubit vectors with Gram matrix \(G\), and by construction the corresponding phase comparison matrix is exactly \(U\).
\end{proof}

\begin{remark}
Corollary~\ref{cor:phase-realizability} shows that the phase comparison structure alone is not arbitrary.
It must be compatible with positivity and with the rank-two constraint imposed by the qubit Hilbert space.
In particular, not every \(U(1)\)-valued reciprocal comparison matrix can arise from qubit states.
\end{remark}

A particularly simple realizable class is given by coherent phase comparisons.

\begin{proposition}\label{prop:coherent-realizable}
Let
\[
U=(u_{ij})\in \PC_N(U(1))
\]
be such that
\[
u_{ij}u_{jk}u_{ki}=1
\qquad
\text{for all } i,j,k.
\]
Then \(U\) is realizable by a family of normalized qubit vectors.
\end{proposition}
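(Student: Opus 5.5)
Coherence lets us write $U$ as a coboundary, and a coboundary is realized by a single ray with different phase representatives. Fix the index $1$ and set $\theta_i:=u_{1i}\in U(1)$ for each $i$. Taking $k=1$ in the coherence hypothesis and using reciprocity gives
\[
u_{ij}=u_{i1}u_{1j}=\overline{\theta_i}\,\theta_j
\qquad\text{for all } i,j,
\]
since $u_{i1}=u_{1i}^{-1}=\overline{u_{1i}}$. The cases $i=1$ or $j=1$ follow directly from $u_{11}=1$. Thus $U$ has the rank-one form $U=\theta^\ast\theta$, where $\theta=(\theta_1,\dots,\theta_N)$ is viewed as a row vector.

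Next we realize $U$ concretely. Fix any unit vector $e\in\C^2$, say $e=(1,0)$, and put $\psi_i:=\theta_i e$. Each $\psi_i$ is normalized. Since the scalar product is conjugate-linear in the first slot (the convention used in the rephasing lemma), we get
\[
\ip{\psi_i}{\psi_j}=\overline{\theta_i}\theta_j=u_{ij}.
\]
All overlaps therefore have modulus $1$, and in particular they are nonzero. So the family is non-orthogonal, and its phase comparison matrix is $u_{ij}/|u_{ij}|=u_{ij}$.

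Equivalently, one can argue through Corollary~\ref{cor:phase-realizability}. Take $G:=U$ itself. It is Hermitian because $u_{ji}=\overline{u_{ij}}$. It is positive semidefinite of rank one because $\sum_{i,j}\overline{c_i}\,u_{ij}\,c_j=|\sum_j\theta_jc_j|^2\ge 0$. Its diagonal entries are $1$, its entries are nonzero, and its normalized phases equal $u_{ij}$. Corollary~\ref{cor:phase-realizability} then gives the conclusion.

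The only real step is the factorization $u_{ij}=\overline{\theta_i}\theta_j$, and it follows immediately from coherence with $k=1$. The result is somewhat degenerate: all the $\psi_i$ represent the same ray. This is consistent with Theorem~\ref{thm:solid-angle}, since every triangular defect is trivial.
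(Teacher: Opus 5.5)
Your proof is correct and follows the same route as the paper: you write the coherent matrix as a coboundary $u_{ij}=\overline{\theta_i}\,\theta_j$, realize it by phase multiples of a single unit vector, and alternatively observe that $G=U$ is a rank-one Hermitian positive semidefinite matrix with unit diagonal. Your explicit choice $\theta_i=u_{1i}$ supplies the factorization step that the paper only asserts, and it yields $u_{ij}$ directly rather than $u_{ji}$, so you do not need the paper's final inversion of the phases.
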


\begin{proof}
The coherence condition implies that there exist phases \(\lambda_1,\dots,\lambda_N\in U(1)\) such that
\[
u_{ij}=\lambda_i\lambda_j^{-1}.
\]
Choose any normalized vector \(\psi\in \C^2\), and define
\[
\psi_i:=\lambda_i\psi.
\]
Then
\[
\ip{\psi_i}{\psi_j}
=
\overline{\lambda_i}\lambda_j\,\ip{\psi}{\psi}
=
\overline{\lambda_i}\lambda_j
=
\lambda_i^{-1}\lambda_j.
\]
Hence
\[
\frac{\ip{\psi_i}{\psi_j}}{|\ip{\psi_i}{\psi_j}|}
=
\lambda_i^{-1}\lambda_j
=
u_{ji}.
\]
Replacing \(\lambda_i\) by \(\lambda_i^{-1}\) if needed, one gets exactly the prescribed matrix \(U\).
Equivalently, one may define the Gram matrix directly by
\[
g_{ij}:=\lambda_i\lambda_j^{-1}.
\]
This matrix has rank \(1\), is Hermitian positive semidefinite, and has diagonal entries equal to \(1\).
The conclusion then follows from Proposition~\ref{prop:gram-characterization}.
\end{proof}

Thus complete coherence corresponds to the most degenerate realizable case, namely a single projective ray with different phase representatives.

\subsection{Incomplete pairwise comparisons}

In the quantum setting, incompleteness appears naturally as soon as some transition amplitudes vanish.
Indeed, if
\[
\ip{\psi_i}{\psi_j}=0,
\]
then the normalized phase
\[
u_{ij}=\frac{\ip{\psi_i}{\psi_j}}{|\ip{\psi_i}{\psi_j}|}
\]
is not defined.
This leads naturally to a partially defined comparison structure.

\begin{definition}
Let \(\psi_1,\dots,\psi_N\in \C^2\) be normalized qubit vectors, and let
\[
g_{ij}:=\ip{\psi_i}{\psi_j}.
\]
The associated \emph{support graph} is the graph \(\Gamma_\psi\) with vertex set
\[
\{1,\dots,N\}
\]
and edge set
\[
E_\psi:=\bigl\{\{i,j\}\mid i\neq j,\ g_{ij}\neq 0\bigr\}.
\]
\end{definition}

Thus the missing edges correspond exactly to orthogonal pairs of states.

\begin{definition}
The \emph{partial phase comparison matrix} associated with the family \((\psi_i)\) is the partially defined matrix
\[
U^\partial=(u_{ij}),
\qquad
u_{ij}:=\frac{g_{ij}}{|g_{ij}|}
\]
defined precisely for those pairs \((i,j)\) such that \(g_{ij}\neq 0\).
\end{definition}

For every available pair, one still has
\[
u_{ji}=u_{ij}^{-1},
\qquad
u_{ii}=1.
\]
So \(U^\partial\) is an incomplete \(U(1)\)-valued reciprocal comparison matrix in the usual sense of incomplete pairwise comparison theory \cite{Harker1987,FedrizziGiove2007,BozokiFulopRonyai2010}.

\begin{definition}
Let \(i,j,k\) be pairwise distinct vertices such that
\[
g_{ij}\neq 0,\qquad g_{jk}\neq 0,\qquad g_{ki}\neq 0.
\]
Equivalently, assume that \((i,j,k)\) spans a triangle in the support graph \(\Gamma_\psi\).
Then the associated \emph{partial triangular defect} is
\[
\kappa_{ijk}^{\partial}:=u_{ij}u_{jk}u_{ki}\in U(1).
\]
\end{definition}

So triangular defects remain meaningful exactly on those triples for which all three pairwise comparisons are available.

\begin{proposition}
For every triple \((i,j,k)\) spanning a triangle in \(\Gamma_\psi\),
\[
\kappa_{ijk}^{\partial}
=
\frac{\ip{\psi_i}{\psi_j}\ip{\psi_j}{\psi_k}\ip{\psi_k}{\psi_i}}
{|\ip{\psi_i}{\psi_j}\ip{\psi_j}{\psi_k}\ip{\psi_k}{\psi_i}|}.
\]
In other words, the partial triangular defect is again the normalized Bargmann invariant of the triple.
\end{proposition}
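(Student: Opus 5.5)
The plan is to reduce the statement to the computation already carried out in Proposition~\ref{prop:kappa-bargmann}. The only new ingredient is that the ambient family may now contain orthogonal pairs. The key observation is that the identity involves only the three entries \(g_{ij},g_{jk},g_{ki}\). By hypothesis these are all nonzero, so nothing about the remaining (possibly vanishing) overlaps enters the argument.

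First, since \((i,j,k)\) spans a triangle in \(\Gamma_\psi\), the edges \(\{i,j\},\{j,k\},\{k,i\}\) all lie in \(E_\psi\). Hence \(u_{ij},u_{jk},u_{ki}\) are defined entries of the partial matrix \(U^\partial\), given by
\[
u_{ij}=\frac{g_{ij}}{|g_{ij}|},\qquad u_{jk}=\frac{g_{jk}}{|g_{jk}|},\qquad u_{ki}=\frac{g_{ki}}{|g_{ki}|}.
\]
Here \(g_{ki}\neq 0\) follows from \(g_{ik}\neq0\) and the Hermitian symmetry \(g_{ki}=\overline{g_{ik}}\), so the orientation of the edge does not matter.

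Next, I would multiply these three expressions and use the multiplicativity of the modulus, \(|g_{ij}||g_{jk}||g_{ki}|=|g_{ij}g_{jk}g_{ki}|\). This gives
\[
\kappa^\partial_{ijk}=\frac{g_{ij}g_{jk}g_{ki}}{|g_{ij}g_{jk}g_{ki}|}=\frac{\Barg_{ijk}}{|\Barg_{ijk}|}=\widetilde{\Barg}_{ijk}.
\]
Here \(\widetilde{\Barg}_{ijk}\) is well defined, because its definition only requires the three overlaps of the triple to be nonzero. Equivalently, one can apply Proposition~\ref{prop:kappa-bargmann} to the subfamily \((\psi_i,\psi_j,\psi_k)\), which is a non-orthogonal family of three qubit vectors.

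There is no real obstacle. The only point needing care is to check that passing to the subfamily is legitimate: the earlier proposition was stated for a fully non-orthogonal family, but its proof uses only the three overlaps of the triple, so restricting to the triple is harmless.
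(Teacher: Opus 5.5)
Your proposal is correct and matches the paper's own proof, which simply repeats the computation of Proposition~\ref{prop:kappa-bargmann}, noting that only the three overlaps $g_{ij},g_{jk},g_{ki}$ of the triangle enter. Your extra remarks, on edge orientation via Hermitian symmetry and on why $\widetilde{\Barg}_{ijk}$ is well defined, are sound details that the paper leaves implicit.
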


\begin{proof}
This is the same computation as in Proposition~\ref{prop:kappa-bargmann}, restricted to triples for which all relevant overlaps are nonzero.
\end{proof}

\begin{remark}
The incomplete setting is not a pathological variant of the theory.
On the contrary, it is the natural framework as soon as orthogonality is allowed.
What survives from the complete theory is the local part of the phase-comparison structure, namely the reciprocal data on the support graph and the triangular defects on available triangles.
\end{remark}

There is also a qubit-specific geometric feature.

\begin{proposition}
Assume that the rays
\[
[\psi_1],\dots,[\psi_N]\in \CP^1
\]
are pairwise distinct.
Then every vertex of the complement graph of \(\Gamma_\psi\) has degree at most \(1\).
Equivalently, the orthogonality graph is a matching.
\end{proposition}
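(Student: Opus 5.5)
The plan is to use the fact that in \(\C^2\) the orthogonal complement of a nonzero vector is a single complex line. Fix a vertex \(i\) and suppose that \(j\neq i\) and \(k\neq i\) are both non-adjacent to \(i\) in \(\Gamma_\psi\), that is, \(\ip{\psi_i}{\psi_j}=0\) and \(\ip{\psi_i}{\psi_k}=0\). We will show that \(j=k\). This says exactly that \(i\) has degree at most \(1\) in the complement graph.

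First, since \(\psi_i\neq 0\) and \(\dim_\C\C^2=2\), the orthogonal complement \(\psi_i^\perp\) is a one-dimensional complex subspace, i.e. a single ray in \(\CP^1\). Both \(\psi_j\) and \(\psi_k\) are nonzero vectors in \(\psi_i^\perp\), so \([\psi_j]=[\psi_i^\perp]=[\psi_k]\). Concretely, \(\psi_k=e^{i\theta}\psi_j\) for some \(\theta\), because both are normalized vectors in the same line.

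Second, the hypothesis that the rays are pairwise distinct forces \(j=k\). If \(j\neq k\), we would have \([\psi_j]\neq[\psi_k]\), which is a contradiction. Hence each vertex has at most one non-neighbour among the other vertices, and the non-edges form a matching.

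Alternatively, one can phrase the argument through Proposition~\ref{prop:transition-prob-bloch}. Orthogonality means \(n_i\cdot n_j=-1\), so \(n_j=-n_i\). Since the Bloch map identifies rays with points of \(\Sphere^2\) injectively, there is at most one \(j\) with \(n_j=-n_i\). This version might be worth recording, since it interprets the orthogonal pairs as antipodal pairs on the Bloch sphere.

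There is no real obstacle here. The only point needing care is the use of pairwise distinctness of the rays, as opposed to the vectors: without it, several representatives of the same ray orthogonal to \(\psi_i\) would give degree greater than one. This hypothesis is what the statement assumes.
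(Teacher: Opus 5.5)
Your proof is correct and is essentially the paper's argument: the orthogonal complement of $\psi_i$ in $\C^2$ is a single ray, so any two vertices orthogonal to $i$ represent the same ray and must coincide because the rays are pairwise distinct. Your Bloch-sphere variant (orthogonality holds iff $n_j=-n_i$, combined with injectivity of $[\psi]\mapsto n_\psi$) is also valid; it restates the same fact as antipodality on the sphere.
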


\begin{proof}
Fix a ray \([\psi]\in \CP^1\).
In \(\C^2\), its orthogonal complement is one-dimensional, hence determines a unique orthogonal ray.
Therefore, if \([\psi_i]\) is orthogonal to both \([\psi_j]\) and \([\psi_k]\), then \([\psi_j]=[\psi_k]\).
Under the assumption that the rays are pairwise distinct, this implies \(j=k\).
Hence each vertex is orthogonal to at most one other vertex.
\end{proof}

So, for distinct qubit rays, incompleteness is highly constrained: missing comparisons cannot form arbitrary patterns.

\subsection{Beyond the unitary phase level}

The phase comparison matrix \(U=(u_{ij})\) captures only the phase part of the transition amplitudes.
However, the full quantum data also involve the moduli
\[
|g_{ij}|=|\ip{\psi_i}{\psi_j}|.
\]
This becomes apparent already at the level of triples.

Indeed, for every non-orthogonal triple one has
\[
\Barg_{ijk}
=
g_{ij}g_{jk}g_{ki}
=
|g_{ij}|\,|g_{jk}|\,|g_{ki}|\,\kappa_{ijk}.
\]
Thus the full triple datum splits into:
\begin{enumerate}[label=\textup{(\roman*)}]
\item a positive amplitude factor
\[
|g_{ij}|\,|g_{jk}|\,|g_{ki}|,
\]
which depends only on transition probabilities;

\item a phase factor
\[
\kappa_{ijk}\in U(1),
\]
which is the normalized Bargmann invariant.
\end{enumerate}

\begin{remark}
From this viewpoint, the \(U(1)\)-valued pairwise comparison structure should be seen as an extraction from the full transition-amplitude structure, not as a replacement for it.
The phase comparison formalism isolates a particularly rigid and geometrically meaningful part of the quantum data, but it does not exhaust them.
\end{remark}

This also suggests possible extensions beyond pure states and unitary kinematics.
For mixed states or non-unitary evolutions, one may still consider transition probabilities, fidelities, and generalized geometric phases \cite{Uhlmann1976,Jozsa1994,SjoqvistEtAl2000,Sjoqvist2004,TongSjoqvistKwekOh2004,deFariaPizaNemes2003}.
In such settings, the pairwise-comparison viewpoint survives, but the simple \(U(1)\)-valued formalism must typically be replaced by a more elaborate structure.

\begin{remark}
The present note does not attempt to develop this generalization.
Its purpose is simply to emphasize that even in the elementary qubit setting, pairwise comparisons naturally occur at three distinct levels:
\[
\text{amplitudes} \quad\longrightarrow\quad \text{probabilities} \quad\longrightarrow\quad \text{phases}.
\]
The phase level is the one most directly related to reciprocal comparison matrices, while the amplitude level retains the non-unitary information encoded in the moduli.
\end{remark}

\begin{center}
\begin{tabular}{|c|c|c|c|}
\hline
Quantum datum & Formula & Pairwise-comparison viewpoint & Constraint
\\
\hline
Transition amplitude
&
\(g_{ij}=\langle \psi_i,\psi_j\rangle\)
&
complex comparison datum
&
Hermitian positive Gram matrix
\\
\hline
Transition probability
&
\(p_{ij}=|g_{ij}|^2\)
&
symmetric comparison strength
&
\(0\le p_{ij}\le 1\)
\\
\hline
Phase comparison
&
\(u_{ij}=\dfrac{g_{ij}}{|g_{ij}|}\)
&
\(U(1)\)-valued reciprocal PC matrix
&
defined only if \(g_{ij}\neq 0\)
\\
\hline
Triangular defect
&
\(\kappa_{ijk}=u_{ij}u_{jk}u_{ki}\)
&
local inconsistency factor
&
defined on available triangles
\\
\hline
Bargmann invariant
&
\(\Barg_{ijk}=g_{ij}g_{jk}g_{ki}\)
&
amplitude refinement of the defect
&
depends on moduli and phase
\\
\hline
Geometric phase
&
\(\arg(\Barg_{ijk})\)
&
phase of inconsistency
&
equals half the oriented solid angle
\\
\hline
\end{tabular}
\end{center}

\section{Concluding remarks}

In this note, we have proposed a simple pairwise-comparison reading of finite families of qubit states.
Starting from the transition amplitudes
\[
g_{ij}=\ip{\psi_i}{\psi_j},
\]
we distinguished three associated levels of pairwise data:
the full complex amplitudes \(g_{ij}\), the transition probabilities
\[
p_{ij}=|g_{ij}|^2,
\]
and, in the non-orthogonal case, the phase comparison coefficients
\[
u_{ij}=\frac{g_{ij}}{|g_{ij}|}\in U(1).
\]

The phase level is the one most directly related to reciprocal pairwise comparison theory.
It leads naturally to the triangular defects
\[
\kappa_{ijk}=u_{ij}u_{jk}u_{ki},
\]
which play the role of local inconsistency factors.
The main observation of the paper is that these defects are exactly the normalized Bargmann invariants of the corresponding triples of qubit states.
Their arguments are therefore geometric phases, and in the qubit case they are directly related to the oriented geometry of geodesic triangles on the Bloch sphere.

This provides a simple bridge between two languages that are usually kept apart.
On the one hand, pairwise comparison theory emphasizes reciprocity, coherence, and local defect measures.
On the other hand, elementary quantum geometry emphasizes transition amplitudes, projective invariance, and geometric phase.
In the qubit case, these two viewpoints meet in a particularly transparent way.

We also stressed that not every reciprocal \(U(1)\)-valued comparison matrix comes from qubit states.
Realizability is constrained by the existence of a Hermitian positive semidefinite Gram matrix of rank at most \(2\).
In addition, incomplete pairwise comparisons arise naturally as soon as orthogonality is present, so that the incomplete setting is not artificial but intrinsic to the quantum picture.

The present note is intentionally modest.
Its purpose is not to provide a full reconstruction theory, nor a general framework for quantum pairwise comparisons in arbitrary dimension.
Rather, it isolates a basic dictionary:
\[
\text{transition amplitudes}
\;\longrightarrow\;
\text{phase comparisons}
\;\longrightarrow\;
\text{triangular defects}
\;\longrightarrow\;
\text{geometric phases}.
\]
We hope that this elementary correspondence may be useful for further work in several directions, including higher-dimensional Hilbert spaces, mixed states, non-unitary evolutions, and more general group-valued comparison structures.

\vskip 12pt

\paragraph{\bf Data availability statement} No data is available for this work.

\vskip 12pt

\paragraph{\bf Conflict of interest statement} The author declares no conflict of interest.

\vskip 12pt

\paragraph{\bf Funding} No funding supported this work.

\vskip 12pt

\paragraph{\bf Acknowledgements} J.-P.M thanks the France 2030 framework programme Centre Henri Lebesgue ANR-11-LABX-0020-01 
for creating an attractive mathematical environment.

\vskip 12pt

\paragraph{\bf Author's Note on AI Assistance}
Portions of the text were developed with the assistance of a generative language model (OpenAI ChatGPT, based on the GPT-4 architecture). The AI was used to assist with drafting, editing, and standardizing the bibliography format. All mathematical content, structure, and theoretical constructions were provided, verified, and curated by the author. The author assumes full responsibility for the correctness, originality, and scholarly integrity of the final manuscript.

\end{document}